\DeclareFontFamily{OT1}{pzc}{}
\DeclareFontShape{OT1}{pzc}{m}{it}{<-> s * [1.10] pzcmi7t}{}
\DeclareMathAlphabet{\mathpzc}{OT1}{pzc}{m}{it}
\newtheorem{theorem}{Theorem}[section]
\newtheorem{lemma}[theorem]{Lemma}
\newtheorem{remark}[theorem]{Remark}
\providecommand{\R}{\mathbb{R}}
\providecommand{\SE}{\mathbf{SE}}
\providecommand{\calM}{\mathcal{M}}
\providecommand{\td}{\mathrm{d}}
\providecommand{\ddt}{\frac{\td}{\td t}}
\providecommand{\dtau}{\td \tau}
\providecommand{\scirc}{%
    \hbox{\fontfamily{\rmdefault}\fontsize{0.4\dimexpr(\f@size pt)}{0}\selectfont{\raisebox{-0.52ex}[0ex][-0.52ex]{$\circ$}}}}
\mathchardef\mhyphen="2D
\providecommand{\etal}{\textit{et al.}~}
\newcommand{\pb}[2]{\left\langle #1 \;\vline\; #2 \right\rangle}
\newlength{\bgwidth}
\newlength{\bgheight}
\newcommand{\bgel}[1]{
	\settowidth{\bgwidth}{#1}
	\settowidth{\bgheight}{#1}
	\parbox[c]{\bgwidth}{#1}
}
\providecommand{\margin}[1]{} 
\providecommand{\todo}[1]{}
\providecommand{\aside}[1]{} 
\begin{document}

\title{A Novel Passivity-Based Trajectory Tracking Control For Conservative Mechanical Systems}
\headertitle{Passivity-Based Trajectory Tracking Control}

\author{
\href{https://orcid.org/0000-0002-7803-2868}{\includegraphics[scale=0.06]{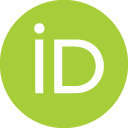}\hspace{1mm}
Robert Mahony}
\\
	Systems Theory and Robotics Group \\
	Australian National University \\
    ACT, 2601, Australia \\
	\texttt{Robert.Mahony@anu.edu.au} \\
}

\maketitle

\begin{abstract}
Most passivity based trajectory tracking algorithms for mechanical systems can only stabilise reference trajectories that have constant energy.
This paper overcomes this limitation by deriving a single variable Hamiltonian model for the reference trajectory and solving along the constrained trajectory to obtain a \emph{reference potential}.
This potential is then used as the model to shape the energy of the true system such that its free solutions include the desired reference trajectory.
The proposed trajectory tracking algorithm interconnects the reference and true systems through a virtual spring damper along with an outer-loop energy pump/damper that stabilises the desired energy level of the interconnected system, ensuring asymptotic tracking of the desired trajectory.
The resulting algorithm is a fully energy based trajectory tracking control for non-stationary trajectories of conservative mechanical systems.
\end{abstract}

\keywords{
Trajectory Tracking, Passivity Based Control, Port Hamiltonian Systems
}


\section{Introduction}
\label{sec:intro}

The industry standard trajectory tracking control algorithms for mechanical systems depend on the well known Euler-Lagrange formulation of the dynamics in terms of generalised coordinates $q \in \R^n$ (for example \cite{SpoHutVid_2006})
\begin{align}
M(q) \ddot{q} = - C(q,\dot{q}) \dot{q} - \frac{\partial U}{\partial q}(q) + \tau,
\label{eq:q_dyn}
\end{align}
for a bounded generalised inertia matrix $M(q) > 0$, Coriolis matrix $C(q,\dot{q})$, potential $U(q)$, and exogenous generalised input $\tau$.
The goal is to find an input $\tau(t)$ such that the systems trajectory $q(t)$ converges asymptotically to a known reference trajectory $q_{\text{r}}(t)$.
The standard approach (computed torque with passive damping injection) exploits the Euclidean structure of the generalised coordinates $q \in \R^n$ explicitly in creating an error $\tilde{q}= q - q_{\text{r}}$.
A trajectory and state dependent feed-forward torque $\tau_{\text{ff}}$ is computed to bring the error dynamics into a time-varying linear structure
\begin{align}
M(q) \ddot{\tilde{q}} = - C(q,\dot{q})\dot{\tilde{q}}  + \tilde{\tau}
\label{eq:q_tilde_error_dyn}
\end{align}
where $\tilde{\tau} = \tau - \tau_{\text{ff}}$ is the residue torque.
From here the algebraic passivity properties of $M(q)$ and $C(q,\dot{q})$ are exploited to show a $PD$ control (from $\tilde{q}$ to $\tilde{\tau}$) stabilizes the error coordinates $\tilde{q} \rightarrow 0$ and hence $q(t) \rightarrow q_\text{r}(t)$.

Although this approach has been an effective tool in nonlinear control design for many years, it is subject to funamental limitations:
\begin{enumerate}
\item The error $\tilde{q}$ is a Euclidean difference that has no physical  interpretation in terms of the structure of the mechanical system and is only defined locally (in a coordinate patch) for a general system with a manifold state-space.
	
	\item The feed-forward computed torque $\tau_{\text{ff}}$ has no clear physical interpretation.
Computing this requires real-time measurements of $(q, \dot{q})$ as well as $(q_\text{r},\dot{q}_\text{r},\ddot{q}_\text{r})$ and a good model of the system.
\end{enumerate}

As a consequence, systems for which the state space is inherently non-Euclidean, such as mobile robots moving in $\SE(3)$, or systems where the model is uncertain, such as manipulators moving objects that are not well characterised, are difficult to address using the classical tracking framework.
Furthermore, due to the complex nonlinear dependence of the feed-forward control on the system state many of the tricks used for velocity free passivity based control, or passivity based robust disturbance rejection \cite{2007_Secchi_book} cannot be exploited.

In this paper, I propose a `true' passivity based tracking control for conservative mechanical systems.
The key innovation is to use the reference trajectory as a parametrization for a single coordinate Hamiltonian system with kinetic and potential energy inherited from the full system.
The desired reference trajectory behaviour corresponds to constant velocity motion of the single parameter reference system and the exogenous input that generates this trajectory is uniquely defined.
Due to the fact that the system has a single coordinate the input associated to the reference trajectory can be integrated to generate a \emph{reference potential}.
Interpreting the reference input as shaping the potential of the reference system, the constant velocity solution becomes a `free' solution of the single coordinate Hamiltonian reference system.

The second contribution of the paper is to show how to use the reference potential to shape the true system potential energy in such a way that the reference trajectory is a free solution of the (shaped) full Hamiltonian system.
The proposed control then interconnects the shaped reference and shaped true system using non-linear spring dampers in order that the response of both systems is synchronised.
This step is the primary control that forces the true system to converge towards the desired reference behaviour.
The final stage of the control introduces an energy pump that stabilises the energy level in both true and reference system to the desired reference energy level ensuring that the response of the interconnected systems converges asymptotically to the desired reference trajectory.

The resulting control is inherently energy based and does not involve linearisation of the dynamics.
The response should be highly robust to model errors and external disturbances and particularly well suited to trajectory tracking in the presence of unknown but energy bounded disturbances.
The philosophy of the design draws strongly from the inspiration of Full \etal \cite{1999_Koditschek_JEB} and Westervelt \etal \cite{2003_Grizzle_2003} in stabilising an embedded behaviour in a larger mechanical system to obtain emergent behaviour in the zero dynamics.
A limitation of the approach is that the desired trajectory $q_r(t)$ is only tracked up to an arbitrary time-offset $t_0$; that is, $q(t) \rightarrow q_r(t - t_0)$.

\section{Literature Review}
\label{sec:literature}

Trajectory tracking control for dynamical mechanical systems is a classical problem robotics \cite{SloLi91,SpoHutVid_2006}.
The industry standard method for trajectory tracking is that of computed torque control, first developed in the seventies \cite{1973_Markewitz_techrep,1982_Paul_MITpress}.
The summary of this approach covered in the introduction of the paper was based on classical robotics texts \cite{Cra89,SloLi91,SpoHutVid_2006}.
Set point regulation has received considerably more attention.
In 1981 Takegaki \etal \cite{1981_Takegaki_JDSMC} demonstrated that a PD control with gravity compensation can regulate a robotic manipulator to any fixed point in its workspace, and this work was generalised to adaptive passivity based set point regulation during the eighties
\cite{1986_Middleton_cdc,1989_Weiping_scl} with the global stability analysis for PD, PID, and PI$^2$D controls provided in the nineties \cite{1995_Kelly_Robotica,1998_Ortega,2000_Lefeber_PhD}.
Although approximate trajectory tracking can be accomplished with joint based PD \cite{1987_Koditschek_techrep,1988_Kawamura_ICRA} most theoretical developments of these techniques have been aimed at regulation of set-point or constant energy trajectories \cite{2000_Lefeber_PhD}.

A seminal contribution to tracking trajectory control is the work by Fujimoto \etal \cite{2003_Fujimoto_Automatica} in 2003.
In this work the authors state; ``The key idea to realize [the objective of tracking control] is to embed the desired trajectory into the Hamiltonian function of the original system.''
They provide a characterisation of a generalised canonical transform that transforms a port Hamiltonian system into one written in terms of the error coordinates
\begin{align}
M(\tilde{q}) \ddot{\tilde{q}} = - C(\tilde{q},\dot{\tilde{q}}) \dot{\tilde{q}} - \frac{\partial U}{\partial \tilde{q}}(\tilde{q}) + \tau,
\label{eq:q_dyn}
\end{align}
(compare to \eqref{eq:q_tilde_error_dyn}) which can be stabilised using passive damping injection.
The approach is conceptually rigorous but formally requires the solution of a Partial Differential Equation (PDE) to find the canonical transformation that defines the global error $\tilde{q}$ coordinates.
This requirement is reminiscent of the IDA-PCA algorithms \cite{1998_Ortega} developed around the same time and the complexity of the design process has limited their adoption in the wider community.

A separate research thread I draw from in this paper is summarised in the philosophy stated in Westervelt \etal \cite{2003_Grizzle_2003} ``to encode the dynamic task via a lower dimensional target, itself represented by a set of differential equations'' \cite{2000_Nakanishi_TRO} motivated by the study of how biological systems move \cite{1999_Koditschek_JEB}.
This work has been important in development of advanced trajectory tracking algorithms for gait control for humanoid robots \cite{2003_Grizzle_2003} in particular, and in general for control of robotic motion.
I also draw from the pioneering work by Spong \cite{1995_Spong_CSM} on energy pumping for stabilisation of constant energy set-points for under-actuated mechanical systems.

\section{Problem formulation}
\label{sec:prob}

Let $q \in \R^N$ denote (local) generalized coordinates for a fully actuated mechanical system with generalized inertia matrix $M(q) > 0$ and potential function $U(q)$.
Lagrange's equations yield dynamics \eqref{eq:q_dyn} for a Coriolis matrix $C(q,\dot{q})$.
The passivity of the system is expressed algebraically by the relationship that
\[
\left( \frac{1}{2}\frac{\td}{\td t} M(q) + C(q,\dot{q}) \right)^\top =
- \left( \frac{1}{2}\frac{\td}{\td t} M(q) + C(q,\dot{q})\right)
\]
is skew symmetric along trajectories of the system.
The Hamiltonian for this system is
\begin{align}
H(q,\dot{q}) = \frac{1}{2} \dot{q}^\top M(q) \dot{q} +  U(q)
\label{eq:H}
\end{align}
and passivity of the system ensures that
\begin{align}
\ddt H(q, \dot{q}) = \pb{\tau}{\dot{q}}
\label{eq:ddt_H}
\end{align}
where $\pb{\cdot}{\cdot}$ is notation for the power bracket between a generalised input and a velocity, and can be read as a simple inner product $\tau^\top \dot{q}$ in local coordinates.

The trajectory tracking problem is associated with a pre-specified reference trajectory $q_{\text{r}}(t)$.
In this paper, I will consider only non-stationary reference trajectories that are bounded, twice differentiable with bounded differentials.
The non-stationary assumption,
\begin{align}
\dot{q}_\text{r}(t) \not= 0,
\label{eq:non-stationary}
\end{align}
is the only unusual assumption.
Indeed, stationary trajectories are exactly those trajectories that can be easily tackled using classical passivity based techniques \cite{1998_Ortega}.
This assumption emphasises the fundamental differences between the present and previous energy based approaches to tracking.
Although I believe that it may be possible to relax this condition in future work, the assumption considerably simplifies the present work.
In fact, I will assume slightly more than \eqref{eq:non-stationary}, that the velocity is bounded away from zero.
That is, that there exists a uniform bound $\beta > 0$ such that $\dot{q}_\text{r}^\top M(q_\text{r}) \dot{q}_\text{r} > \beta$ for all time.

The nominal goal of the trajectory tracking problem is to find a control signal $\tau$ for \eqref{eq:q_dyn} depending on the known reference trajectory $q_\text{r}$ and its derivatives $(\dot{q}_\text{r},\ddot{q}_\text{r})$ along with state measurements $(q, \dot{q})$ such that for any initial condition $q_0$ the solution $q(t;q_0)$ of the closed-loop system \eqref{eq:q_dyn} converges to the reference, $q(t;q_0) \rightarrow q_r(t)$ asymptotically.
The proposed algorithm achieves $q(t;q_0) \rightarrow q_r(t - t_0)$ asymptotically for some time-offset $t_0 \in \R$.
Although this is not explicitly a solution to the classical trajectory tracking problem it is clearly applicable to a wide range of applied problems and the difference again emphasises the difference in approach.

\section{Reference trajectory}
\label{sec:ref_traj}

In this section, I consider the dynamics of the reference trajectory.
The approach taken is to reparametrize the reference trajectory as a one parameter mechanical system constrained to follow a \emph{reference path}.
The key advantage of this reformulation is to derive a potential function that encodes the information necessary to generate the exogeneous generalised force input associated with the reference trajectory.
The goal of the section is to formulate a closed Hamiltonian system for which the reference trajectory is a free solution.

The starting position is to consider the reference $q_{\text{r}}(t)$ as a solution of the mechanical system \eqref{eq:q_dyn}.
In particular, there must be a reference input $\tau_{\text{r}}$ associated with the actual generalised forces required to track the reference trajectory.
More correctly, we consider a reference trajectory to be a quadruple of variable trajectories $(q_{\text{r}}(t), \dot{q}_{\text{r}}(t), \ddot{q}_{\text{r}}(t), \tau_{\text{r}}(t))$ defined on a time interval $[0, \infty )$ that satisfy the system relationship
\begin{equation}\label{eq:reference_trajectory}
M(q_{\text{r}}) \ddot{q}_{\text{r}} = - C(q_{\text{r}},\dot{q}_{\text{r}})\dot{q}_{\text{r}} - \frac{\partial}{\partial q} U(q_{\text{r}}) + \tau_{\text{r}}
\end{equation}
for $t \in (0, \infty)$.
Note that the reference trajectory information, including the reference input $\tau_{\text{r}}$ can be computed in advance.

Let $s$ be a path parameter for the reference trajectory.
That is we consider the map $q_{\text{r}} : \R \rightarrow \R^n$, $s \mapsto q_{\text{r}}(s)$.
Consider the mechanical system obtained by taking $s$ to be a generalised coordinate $s :=s(t)$ for a new one-dimensional mechanical system that characterises the motion of the reference along its predefined trajectory.
That is, as $s(t)$ varies in one dimension, the reference variable $q_\text{r}$ moves along the constrained \emph{path} $q_{\text{r}}(s(t))$.
Define the partial differentials of $q_{\text{r}}$ with respect to $s$ by
\[
q'_{\text{r}}(s) := \frac{\partial q_{\text{r}}}{\partial s}, \quad
q''_{\text{r}}(s) := \frac{\partial^2 q_{\text{r}}}{\partial s^2}
\]
For a solution $s := s(t)$ one has $\dot{q}_{\text{r}} = q'_{\text{r}}(s) \dot{s}$ and
\[
\ddot{q}_{\text{r}} = q'_{\text{r}}(s) \ddot{s} + q''_{\text{r}}(s) \dot{s}^2.
\]
The kinetic energy of the constrained reference system is
\begin{align}
	K_{\text{r}}(s, \dot{s}) = \frac{1}{2} q'_{\text{r}}(s)^\top M(q(s)) q'_{\text{r}}(s) \dot{s}^2  =: \frac{1}{2} M_{\text{r}}(s) \dot{s}^2
	\label{eq:kinetic_s}
\end{align}
which defines $M_{\text{r}}(s)$ while the potential is a simple reparameterization $U_{\text{r}}(s) = U(q_{\text{r}}(s))$.

The Lagrangian for this system is given by $\mathcal{L}_{\text{r}}(s,\dot{s}) = K_{\text{r}}(s,\dot{s}) - U_{\text{r}}(s)$.
Consider the Euler-Lagrange equations $\frac{\td}{\td t} \frac{\partial }{\partial \dot{s}} \mathcal{L} - \frac{\partial}{\partial s} \mathcal{L} = e$ where $e$ is a scalar exogenous input added as a driving term.
It is easily verified that
\begin{align}
M_{\text{r}}(s) \ddot{s} = - \frac{1}{2} \frac{\partial}{\partial s}M_{\text{r}}(s) \dot{s}^2 - \frac{\partial}{\partial s} U_{\text{r}}(s) +  e.
\label{eq:EulerLagrange_s}
\end{align}

The dynamics \eqref{eq:EulerLagrange_s}, for a general input $e$, models the motion of a mechanical single variable system along the parametrization provided by $q_{\text{r}}(s)$.
The reference trajectory itself $q_{\text{r}}(t)$ is recovered by the particular solution $s(t) = t$ and will correspond to a particular choice of scalar reference input $e = e_{\text{r}}(s(t))$.
Since the defining equations are time-invariant, then any solution $s(t) = t - t_0$, $t_0 \in \R$, will also be a solution \eqref{eq:EulerLagrange_s} associated with the offset reference trajectory $q_{\text{r}}(t - t_0)$ for $t-t_0 \in (0,\infty)$.
It follows that the scalar reference input $e_{\text{r}}(s)$ can be computed as function of path parameter $s$ independent of time.
Substituting the relationships $\dot{s} = 1$, $\ddot{s}= 0$, characteristic of the trajectory $s(t) = t - t_0$, into \eqref{eq:EulerLagrange_s} yields
\begin{align}
e_{\text{r}}(s)
& = \frac{\partial}{\partial s} \left( U_{\text{r}}(s) + \frac{1}{2}M_{\text{r}}(s)  \right).
\label{eq:e_r}
\end{align}

Define a \emph{reference potential}
\[
W_{\text{r}}(s) = - U_\text{r}(s) - \frac{1}{2}M_{\text{r}}(s)
\]
The potential $W_{\text{r}}(s)$ encodes information on the scalar reference input associated with the reference trajectory for the scalar system \eqref{eq:EulerLagrange_s}.
The input $e_{\text{r}}(s) = -\frac{\partial}{\partial s} W_{\text{r}}(s)$ \eqref{eq:e_r} should be thought of as a potential shaping input where the term $-\frac{\partial}{\partial s} U_{\text{r}}(s)$  cancels the effect of the potential $U_{\text{r}}$ inherited from the full system, while the term $-\frac{\partial}{\partial s} \frac{1}{2}M_{\text{r}}(s)$
is associated with a new shaped potential that compensates the energy balance for variation in the kinetic energy $\frac{1}{2}M_{\text{r}}(s)\dot{s}^2$ along the reference trajectory.
That is, the reference potential $W_\text{r}$ shapes the potential of the reference system to supply the power required to track the reference trajectory.

\begin{lemma}\label{lem:dotW_is_poweralongref}
For any $t_0 \in \R$ let $q_\text{r}(t - t_0)$ be the reference trajectory offset by time $t_0$ and fix $s(t) = t - t_0$.
Then
\begin{align}
\ddt W_\text{r}(s(t)) = - \pb{\dot{q}_\text{r}}{\tau_\text{r}}
\end{align}
\end{lemma}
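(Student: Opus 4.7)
The plan is a short chain-rule calculation that connects the reference potential to the Hamiltonian energy of the full system along the reference trajectory, and then uses the passivity identity \eqref{eq:ddt_H} to identify the resulting quantity with the power bracket.

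First I would apply the chain rule directly to $W_{\text{r}}(s(t)) = -U_{\text{r}}(s) - \tfrac{1}{2} M_{\text{r}}(s)$, using $\dot s = 1$ since $s(t) = t-t_0$. This yields
\[
\ddt W_{\text{r}}(s(t)) = -\frac{\partial}{\partial s}\left(U_{\text{r}}(s) + \tfrac{1}{2} M_{\text{r}}(s)\right) = -e_{\text{r}}(s),
\]
which is just the defining identity \eqref{eq:e_r} for the scalar reference input.

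Second I would identify $e_{\text{r}}(s)$ with the power delivered by $\tau_{\text{r}}$. The key observation is that along the particular solution $s(t) = t - t_0$ one has $\dot{q}_{\text{r}} = q'_{\text{r}}(s)\dot s = q'_{\text{r}}(s)$, so by definition \eqref{eq:kinetic_s} of $M_{\text{r}}$,
\[
\tfrac{1}{2}\dot{q}_{\text{r}}^\top M(q_{\text{r}})\dot{q}_{\text{r}} = \tfrac{1}{2} M_{\text{r}}(s), \qquad U(q_{\text{r}}) = U_{\text{r}}(s).
\]
Adding these and using \eqref{eq:H} gives $H(q_{\text{r}},\dot{q}_{\text{r}}) = \tfrac{1}{2} M_{\text{r}}(s) + U_{\text{r}}(s)$ along the reference trajectory. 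Differentiating in $t$ and again using $\dot s = 1$ yields $\frac{\td}{\td t} H(q_{\text{r}},\dot{q}_{\text{r}}) = e_{\text{r}}(s)$.

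Third I would invoke the passivity identity \eqref{eq:ddt_H} applied to the quadruple $(q_{\text{r}},\dot{q}_{\text{r}},\ddot{q}_{\text{r}},\tau_{\text{r}})$ satisfying \eqref{eq:reference_trajectory}, which gives $\frac{\td}{\td t} H(q_{\text{r}},\dot{q}_{\text{r}}) = \pb{\tau_{\text{r}}}{\dot{q}_{\text{r}}}$. Combining the three steps,
\[
\ddt W_{\text{r}}(s(t)) = -e_{\text{r}}(s) = -\pb{\tau_{\text{r}}}{\dot{q}_{\text{r}}} = -\pb{\dot{q}_{\text{r}}}{\tau_{\text{r}}},
\]
where the last equality uses the symmetry of the power bracket (it is just $\tau_{\text{r}}^\top \dot{q}_{\text{r}}$ in local coordinates).

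There is no real obstacle here; the only point that needs a little care is ensuring consistent use of the time-offset identification $s(t) = t - t_0$ so that $\dot{s} = 1$, which collapses the $\dot{s}^2$ factor in the kinetic energy and makes $H$ along the reference trajectory coincide with $U_{\text{r}} + \tfrac{1}{2} M_{\text{r}}$. Everything else is bookkeeping.
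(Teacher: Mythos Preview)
Your proof is correct and follows essentially the same approach as the paper: both identify $H(q_{\text{r}},\dot{q}_{\text{r}}) = \tfrac{1}{2}M_{\text{r}}(s) + U_{\text{r}}(s) = -W_{\text{r}}(s)$ along the solution $\dot s = 1$, and then invoke the passivity identity \eqref{eq:ddt_H} for the reference quadruple. The only cosmetic difference is that you route the calculation through the intermediate scalar $e_{\text{r}}(s)$, whereas the paper differentiates the identity $W_{\text{r}} = -H(q_{\text{r}},\dot q_{\text{r}})$ directly.
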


\begin{proof}
Recalling \eqref{eq:ddt_H} then by construction
\[
\ddt H_\text{r}(s,\dot{s}) = \ddt H(q_\text{r}, \dot{q}_\text{r}) = \pb{\dot{q}_\text{r}}{\tau_\text{r}}
\]
Note that
\begin{align*}
H(q_\text{r},\dot{q}_\text{r}) & = \frac{1}{2} \dot{q}_\text{r}^\top M(q_\text{r}) \dot{q}_\text{r} + U(q_\text{r}) \\
& = \frac{1}{2} q'_\text{r}(s)^\top M(q_\text{r}(s)) q_\text{r}(s)'\dot{s}(t)^2 + U_\text{r}(s(t)) \\
& = \frac{1}{2} M_\text{r}(s) + U(s) = - W_\text{r}(s)
\end{align*}
where the last line follows since $\dot{s}(t) = 1$ for the solution considered.
Differentiating this relationship one has
\begin{align*}
\ddt  W_\text{r}(s) & = - \ddt H(q_\text{r},\dot{q}_\text{r})
=
- \pb{\dot{q}_\text{r}}{\tau_\text{r}}.
\end{align*}
\end{proof}

Set $e = e_{\text{r}}(s) + e_\text{c}$ where $e_\text{c}$ is a coupling input from the reference system to the true system and will be defined in \S\ref{sec:control}.
The Euler-Lagrange equations for \eqref{eq:EulerLagrange_s} become
\begin{align}
M_{\text{r}}(s) \ddot{s}
& = - \frac{1}{2} \frac{\partial M_{\text{r}}}{\partial s}(s) \dot{s}^2
- \frac{\partial U_\text{r}}{\partial s}(s) - \frac{\partial}{\partial s}  W_{\text{r}}(s) + e_\text{c} \notag \\
& = - \frac{1}{2} \frac{\partial M_{\text{r}}}{\partial s}(s) \dot{s}^2
+ \frac{1}{2} \frac{\partial M_{\text{r}}}{\partial s}(s) + e_\text{c} \notag \\
& = - \frac{1}{2} \frac{\partial M_{\text{r}}}{\partial s}(\dot{s}^2  - 1) + e_\text{c}
\label{eq:s_dyn}
\end{align}
The dynamics \eqref{eq:s_dyn} are associated with the port Hamiltonian system with Hamiltonian
\begin{align}
H_{\text{r}}(s,\dot{s}) = \frac{1}{2}M_{\text{r}}(s) \dot{s}^2 + U_\text{r}(s) + W_{\text{r}}(s).
\label{eq:H_r}
\end{align}
One has
\begin{align*}
\ddt H_{\text{r}}(s,\dot{s}) & = M_{\text{r}}(s) \ddot{s}\dot{s} +
\frac{1}{2} \frac{\partial M_{\text{r}}}{\partial s}(s) \dot{s}^3
- \frac{1}{2} \frac{\partial M_{\text{r}}}{\partial s}(s) \dot{s}\\
& = -  \frac{1}{2} \frac{\partial M_{\text{r}}}{\partial s}(\dot{s}^2  - 1)\dot{s}  + e_\text{c} \dot{s}
+
\frac{1}{2} \frac{\partial M_{\text{r}}}{\partial s}(s) \dot{s}^3 \\
& \quad\quad\quad\quad - \frac{1}{2} \frac{\partial M_{\text{r}}}{\partial s}(s) \dot{s} \\
& = \pb{\dot{s}}{e_\text{c}}
\end{align*}
where $\pb{\dot{s}}{e_\text{c}} = \dot{s}e_\text{c}$ is the power port for the scalar reference system.

It is easily verified that the unforced or zero dynamics of \eqref{eq:s_dyn} (with $e_\text{c} \equiv 0$) include the solution $\ddot{s} = 0$, $\dot{s}= 1$ as expected.
This solution is associated with a constant (shaped) energy level
\begin{align}
E_\text{r}(s, \dot{s})
& =
H_\text{r}(s, \dot{s}) + W_\text{r}(s) \notag \\
& =
\frac{1}{2} M_\text{r}(s) \dot{s}^2 + U_\text{r}(s) - U_\text{r}(s) - \frac{1}{2}M_\text{r}(s) \notag  \\
& =
\frac{1}{2} M_\text{r}(s) - \frac{1}{2}M_\text{r}(s) = 0
\label{eq:E_r_zero}
\end{align}
since $\dot{s} \equiv 1$ along the reference trajectory.
This energy level is, however, not a minimum energy level for the system, with trajectories in phase portrait (Fig.~\ref{fig:phase_portrait}) that lie inside the $\dot{s} = \pm1$ lines all corresponding to lower energy levels.
The minimum energy trajectories for the reference system will correspond to the stationary points $\dot{s}= 0$.
These stationary points occur at points where $\frac{\partial M_{\text{r}}}{\partial s} = 0$ and their stability will depend on the sign of
$\frac{\partial^2 M_{\text{r}}}{\partial s^2} = 0$ (see Fig.~\ref{fig:phase_portrait}).
The variation in $M_\text{r}(s)$ is a function of the reference trajectory, both in the change in magnitude of $q'_{\text{r}}(t)$ as well as the distribution of inertia in $M(q(s))$.
Thus, reference trajectories with high accelerations and complex inertia matrices will lead to complex zero dynamics in the reference system, while tracking simply trajectories with near constant velocities and homogeneous inertia will lead to simple zero dynamics.
Interestingly, the unforced system also includes solutions associated with negative time evolution of the reference system, and a zero energy solution $\dot{s} = -1$ in particular.
\margin{Check this.}

\begin{figure}  
\vspace{2mm}
	\begin{center}
		\includegraphics[width=8.4cm]{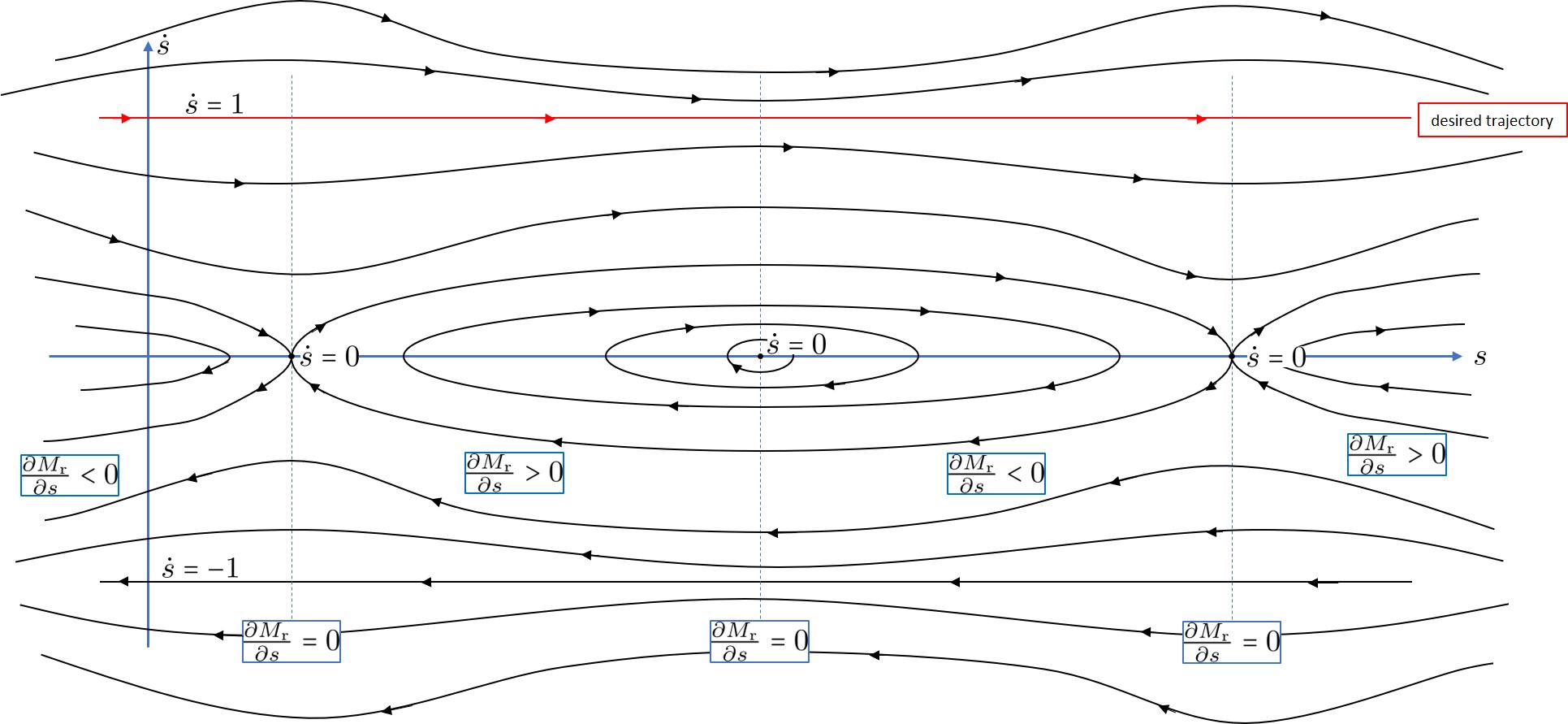}    
		\caption{Phase portrait of the zero (free Hamiltonian) dynamics of \eqref{eq:s_dyn}.
The trajectory shown in red is the desired reference trajectory characterised by $\dot{s} = 1$, $\ddot{s} = 0$.
The phase portrait of the zero dynamics (output $\dot{E} \equiv 0$) is strongly dependent on the sign of $\frac{\partial M_\text{r}}{\partial s}(s)$.
}
		\label{fig:phase_portrait}
	\end{center}
\end{figure}

\section{Shaping, Interconnection and Synchronisation}
\label{sec:control}

In this section I present the first part of the proposed tracking control design.
The idea is to shape the potential of the true system with the reference potential $W_\text{r}$ and then couple the true and reference systems together using passive interconnections.
The resulting control synchronises the state $q(t)$ of the true system \eqref{eq:q_dyn} with the state $q_\text{r}(s(t))$ of the reference system \eqref{eq:s_dyn}.
The final closed-loop system will require an additional energy pumping control discussed in \S\ref{sec:energy_stabilisation} to stabilise the desired energy level set.

The principle control used for synchronization between the true and reference systems is a spring potential introduced between $q(t)$ and $q_\text{r}(s(t))$.
Let $\phi : \calM \times \calM \rightarrow \R$ be a positive definite smooth cost function with $\phi(q, q_\text{r}) = 0$ if and only if $q = q_\text{r}$.
Note that there is no need to compute the linear error $q - q_\text{r}$ in order to find such a spring potential function although the potential $\frac{1}{2}K |q - q_\text{r}|^2$ used in the classical tracking algorithms is an example of such a potential.

The generalised inertia $M(q)$ provides a Riemannian metric on the tangent bundle $T \calM$.
Using this metric one can measure the \emph{speed} of a trajectory $\dot{q}$ by computing $\sqrt{\dot{q}(t)^\top M(q(t)) \dot{q}(t)}$.
In a similar manner one can measure the desired speed of the reference trajectory at a point $s$ by computing
$\sqrt{q'_\text{r}(s)^\top M(q(s)) q'_\text{r}(s)} = \sqrt{M_\text{r}(s)}$.
Note that the reference speed is associated with the desired reference trajectory and does not depend on the velocity of the path parameter $\dot{s}$.
Since only non-stationary reference trajectories were considered, the reference speed $\sqrt{M_\text{r}(s)} \not= 0$.
The ratio of these two speeds as a function of time
\begin{align}
\dot{\sigma}(t) := \frac{\sqrt{\dot{q}(t)^\top M(q(t)) \dot{q}(t)}}
{\sqrt{M_\text{r}(s(t))}}
\label{eq:sigma_dot}
\end{align}
can be interpreted as a measure of the speed at which the true system is moving relative to a desired reference speed associated with a point on the reference trajectory specified by the path parameter $s(t)$.
If $q(t) \approx q_\text{r}(s(t))$ and $\dot{q}(t) \approx q_\text{r}'(s(t))$ then $\dot{\sigma} \approx 1$, that is the true trajectory is moving at the same speed as the desired trajectory.

The integral of \eqref{eq:sigma_dot},
\begin{align}
  \sigma(t) = \int_{0}^t \dot{\sigma}(\tau) \dtau,
\label{eq:sigma_integral}
\end{align}
can be interpreted as a (relative) path parameter for the true system, analogous to the path parameter $s(t)$ introduced for the reference trajectory $q_\text{r}$.
If $q(t) \approx q_\text{r}(s(t))$ and $\dot{q}(t) \approx q_\text{r}'(s(t))$ then $\sigma(t) \approx t - t_0$.
The new path parameter $\sigma$ will be used as a storage variable in a copy of the reference potential $W_\text{r}(\sigma)$ that can be used to shape the potential for the true system.

Define $\tau_{\dot{q}} \in T^\star_q \calM$, the dual (generalised) torque to a velocity $\dot{q} \in T_q \calM$, to be the unique element that solves
\begin{align}
\pb{\tau_{\dot{q}}}{V} = \dot{q}^\top M(q) V
\label{eq:tau_dot_q}
\end{align}
for all $V \in T_q \calM$.
Note that $\tau_{\dot{q}} = 0$ in the case where the velocity $\dot{q} = 0$ is zero.
Recalling the reference input $\tau_\text{r}$ associated with the reference trajectory \eqref{eq:reference_trajectory}, define the \emph{constraint  force}
\begin{align}
\tau^0_\text{r}
& := \frac{\pb{\tau_{\dot{q}}}{\dot{q}}}{M_\text{r}(s)} \tau_r -
 \frac{\pb{\dot{q}}{\tau_r}}{M_\text{r}(s)}\tau_{\dot{q}}
\label{eq:tau_r^0}
\end{align}
By construction $\pb{\tau_\text{r}^0}{\dot{q}} = 0$ and thus does no work on the system.
In the asymptotic case, where $\sigma(t) = s(t) = t - t_0$ then $\dot{\sigma}^2 = 1$ and $\frac{\dot{q}^\top M(q) \dot{q}}{M_\text{r}(s)} = \frac{\pb{\tau_{\dot{q}}}{\dot{q}}}{M_\text{r}(s)} = 1$.
It follows that in this case $\tau^0_\text{r} = \tau_\text{r} -
\frac{\pb{\dot{q}}{\tau_r}}{M_\text{r}(s)}\tau_{\dot{q}}$ is the component of the reference input $\tau_\text{r}$ orthogonal to the dual torque $\tau_{\dot{q}}$.
That is, the component of the reference input that acts orthogonal to the velocity of the system as a constraint force and steers the system to follow the desired trajectory.
The power required to accelerate and decelerate the system along the trajectory is not provided by the constraint force $\tau^0_\text{r}$ and must be separately modelled.
The active force in direction $\tau_{\dot{q}}$ will be coupled to the potential $W_\text{r}(\sigma)$ \eqref{eq:mu}.

A key requirement of the proposed control, separate from the synchronisation of $q(t)$ and $q_\text{r}(s(t))$, is to synchronise the two path parameters $\sigma(t)$ and $s(t)$.
To this end define a second potential
\[
\psi(s,\sigma) := \frac{\kappa}{2} |s - \sigma|^2
\]
Note that since both $s$ and $\sigma \in \R$ are scalars, then this potential can be defined globally as a simple quadratic error.
The introduction of this potential will bound growth of $\sigma(t)$ with respect to growth of $s(t)$, and vice-versa, ensuring that the two path parameters remain close during the system transient.

\begin{theorem}\label{th:synch}
Consider the combined dynamics \eqref{eq:q_dyn}
along with the reference system \eqref{eq:s_dyn}.
Assume there exists $\beta > 0$ such that
${q'_\text{r}}^\top M(q_\text{r}) q'_\text{r} > \beta$.
Define $\sigma(t)$ by \eqref{eq:sigma_dot} and \eqref{eq:sigma_integral}
and the path error by
\begin{align}
\tilde{s} := s - \sigma.
\label{eq:tilde_sigma}
\end{align}

Let $\tau_{\dot{q}}$ be given by \eqref{eq:tau_dot_q} and $\tau_\text{r}^0$ be given by \eqref{eq:tau_r^0}.
Let $\phi(q,q_\text{r})$ be a postive definite spring potential and define the interconnection control signals by
\begin{align}
  \tau &:= - \mu \frac{\tau_{\dot{q}}}{\sqrt{\pb{\tau_{\dot{q}}}{\dot{q}}}}
  + \tau_r^0 - \frac{\partial \phi(q,q_\text{r})}{\partial q},
  \label{eq:tau}  \\
  e_\text{c} & := -\kappa \tilde{s} - R \dot{\tilde{s}}
- \pb{\frac{\partial \phi(q,q_\text{r})}{\partial q_\text{r}}}{q'_\text{r}(s)},
\label{eq:e_c}
\end{align}
where $\kappa, R > 0$ are positive constants and
\begin{align}
\mu  & := \frac{1}{\sqrt{M_\text{r}(s(t))}} \left( \frac{\partial W_\text{r}}{\partial \sigma}  -\kappa \tilde{s} - R \dot{\tilde{s}} \right).
\label{eq:mu}
\end{align}
Then solutions of the closed-loop system converge to constant energy trajectories $\dot{E} \rightarrow 0$  with $\tilde{s}$ a constant: $\tilde{s}(t) \rightarrow \text{const}$.
\end{theorem}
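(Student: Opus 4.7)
The plan is a classical passivity/energy--shaping argument: I construct a single storage/Lyapunov function that aggregates the energies of the true system, the reference system, the interconnecting spring and the path--synchronisation spring, show that all of the control laws \eqref{eq:tau}, \eqref{eq:e_c}, \eqref{eq:mu} together make its time derivative a nonpositive quadratic in $\dot{\tilde{s}}$, and then invoke LaSalle to obtain the claim.

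\medskip
\textbf{Step 1: Candidate Lyapunov function.} I would introduce
\begin{align*}
E \;=\; H(q,\dot{q}) \,+\, W_\text{r}(\sigma) \,+\, H_\text{r}(s,\dot{s}) \,+\, \phi(q,q_\text{r}(s)) \,+\, \tfrac{\kappa}{2}\tilde{s}^{2}.
\end{align*}
The term $W_\text{r}(\sigma)$ plays the role of the shaped potential for the true system; it is built out of the single--variable reference potential with the path parameter $\sigma$ as its argument, mirroring the role of $W_\text{r}(s)$ inside $H_\text{r}$ and turning the shaping torque hidden in $\mu$ into a genuine energy exchange. The spring $\phi$ and the quadratic $\tfrac{\kappa}{2}\tilde{s}^{2}$ couple the two subsystems.

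\medskip
\textbf{Step 2: Derivative along the closed loop.} Using the passivity identity \eqref{eq:ddt_H} for the true system and the identity $\ddt H_\text{r}=\pb{\dot{s}}{e_\text{c}}$ already verified in \S\ref{sec:ref_traj}, together with the chain rule for $\phi(q,q_\text{r}(s))$ and for $W_\text{r}(\sigma)$, I would write
\begin{align*}
\dot{E} \;=\; \pb{\tau}{\dot{q}} \;+\; \tfrac{\partial W_\text{r}}{\partial\sigma}\dot{\sigma} \;+\; \dot{s}\,e_\text{c} \;+\; \pb{\tfrac{\partial\phi}{\partial q}}{\dot{q}} \;+\; \pb{\tfrac{\partial\phi}{\partial q_\text{r}}}{q'_\text{r}(s)}\dot{s} \;+\; \kappa\tilde{s}\dot{\tilde{s}}.
\end{align*}
Substituting \eqref{eq:tau}, the term $\pb{\tfrac{\partial\phi}{\partial q}}{\dot{q}}$ cancels from the spring--gradient piece of $\tau$; the constraint force gives no contribution because $\pb{\tau_\text{r}^0}{\dot{q}}=0$ by \eqref{eq:tau_r^0}; and the dissipative $-\mu\,\tau_{\dot{q}}/\sqrt{\pb{\tau_{\dot{q}}}{\dot{q}}}$ piece produces exactly $-\mu\sqrt{\pb{\tau_{\dot{q}}}{\dot{q}}}=-\mu\sqrt{M_\text{r}(s)}\,\dot{\sigma}$ by the definitions of $\tau_{\dot{q}}$ in \eqref{eq:tau_dot_q} and $\dot{\sigma}$ in \eqref{eq:sigma_dot}. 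Substituting \eqref{eq:mu} turns this into $-(\tfrac{\partial W_\text{r}}{\partial\sigma}-\kappa\tilde{s}-R\dot{\tilde{s}})\dot{\sigma}$. Substituting \eqref{eq:e_c} makes the $\phi$--derivative terms cancel pairwise, and collecting the remaining $\kappa$ and $R$ pieces using $\dot{\tilde{s}}=\dot{s}-\dot{\sigma}$ should yield
\begin{align*}
\dot{E} \;=\; -R\,\dot{\tilde{s}}^{2} \;\le\; 0.
\end{align*}
This bookkeeping is the only place that the specific algebraic forms of $\tau$, $e_\text{c}$, $\mu$, $\tau_\text{r}^0$ and the shaping potential $W_\text{r}(\sigma)$ are used; it is routine but it is the step where mistakes are easiest to make, and it is the one I expect to be the real technical obstacle.

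\medskip
\textbf{Step 3: Boundedness and invocation of LaSalle.} The assumption $M_\text{r}(s)\ge\beta>0$ and boundedness of the reference data ensure that $W_\text{r}$ and $U_\text{r}$ are bounded below, so $E$ is bounded below and hence every sublevel set is forward invariant. The trajectory is therefore bounded, $E$ converges to a limit, and $\int_{0}^{\infty}R\,\dot{\tilde{s}}^{2}\,\dt<\infty$. Applying LaSalle's invariance principle on the compact $\omega$--limit set (or Barbalat's lemma together with boundedness of $\ddot{\tilde{s}}$ inferred from the closed--loop equations) yields $\dot{\tilde{s}}\to 0$, and on the largest invariant subset of $\{\dot{\tilde{s}}=0\}$ the variable $\tilde{s}$ is necessarily constant. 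Since $\dot{E}=-R\dot{\tilde{s}}^{2}\to 0$ as well, both conclusions of the theorem follow.
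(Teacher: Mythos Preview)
Your proposal is correct and follows essentially the same route as the paper: the same storage function, the same term-by-term cancellation yielding $\dot{E}=-R\dot{\tilde{s}}^{2}$, and the same LaSalle/Barbalat conclusion. Your candidate $E$ in fact coincides with the function the paper actually differentiates (the paper's displayed $E$ lists an additional $W_\text{r}(s)$ summand that is never differentiated in its computation; your version is the one that makes the bookkeeping close). The one technical point you gloss over is that $\tau$ may be discontinuous at $\dot{q}=0$, so ordinary Barbalat does not directly apply; the paper handles this by arguing piecewise uniform continuity of $\dot{\tilde{s}}$ and invoking a generalised Barbalat lemma.
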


\begin{remark}
Note that both $\tau_{\dot{q}} = \mathcal{O} (|\dot{q}|)$ and $\sqrt{\pb{\tau_{\dot{q}}}{\dot{q}}} = \mathcal{O}(|\dot{q}|)$ are asymptotically of order $|\dot{q}|$ for $\dot{q} \rightarrow 0$.
Since $M(q)$ is full rank positive definite it follows that $\tau$ \eqref{eq:tau} is well defined even for vanishing $\dot{q}$.
For $\dot{q} = 0$, I choose $\tau_{\dot{q}}/\sqrt{\pb{\tau_{\dot{q}}}{\dot{q}}} = 0$ even though the limit
$\lim_{\dot{q} \rightarrow 0} \tau_{\dot{q}}/\sqrt{\pb{\tau_{\dot{q}}}{\dot{q}}}$ may not be zero.
The resulting definition may be discontinuous at $\dot{q} = 0$.
Since the reference trajectory is chosen $\dot{q}_\text{r} \not= 0$ this discontinuity will not be present in the asymptotic solution of the system.
The transient response analysis of the system depends only on existence of the solutions but some care must be taken in the application of Barbalat's lemma.
\end{remark}

\begin{proof}
Consider the combined energy of the system
\begin{align}
E(q, \dot{q}, s, \dot{s}, \sigma)
& := H(q,\dot{q}) + H_\text{r} (s,\dot{s}) + W_\text{r}(\sigma)+ W_\text{r}(s) \notag \\
& \quad\quad  + \phi(q, q_\text{r}(s)) + \psi(s, \sigma)
\label{eq:E_interconned_system}
\end{align}
where $H(q,\dot{q})$ \eqref{eq:H} is the Hamiltonian for the true system,
$H_\text{r}(s,\dot{s})$ \eqref{eq:H_r} is the Hamiltonian for the reference system, the two terms $W_\text{r}(\sigma)$ and $W_\text{r}(s)$ are shaping potentials for the true and reference system, and $\phi$ and $\psi$ are the potentials associated with the spring forces introduced to synchronise the states.

Recall that the time derivative along trajectories of the Hamiltonians are
\begin{align*}
\ddt H(q,\dot{q}) = \pb{\tau}{\dot{q}},
\quad \ddt H_\text{r} (s,\dot{s}) = \pb{e_\text{c}}{\dot{s}}
\end{align*}
Substituting from \eqref{eq:tau} one has
\begin{align*}
\ddt H & =  \pb{-(\mu \tau_{\dot{q}} - \tau_r^0) - \frac{\partial \phi(q,q_\text{r})}{\partial q}}{\dot{q}} \\
& = - \mu \pb{\tau_{\dot{q}}}{\dot{q}} -
     \frac{\partial \phi(q,q_\text{r})}{\partial q}^\top \dot{q} \\
& =
- \frac{\pb{\tau_{\dot{q}}}{\dot{q}}}{\sqrt{\pb{\tau_{\dot{q}}}{\dot{q}}} \sqrt{M_\text{r}(s(t))}} \left( \frac{\partial W_\text{r}}{\partial \sigma}  - \kappa \tilde{s} - R \dot{\tilde{s}} \right)
\\
& \makebox[3cm]{}
- \frac{\partial \phi(q,q_\text{r})}{\partial q}^\top \dot{q} \\
& =
- \frac{\sqrt{\dot{q}^\top M(q)\dot{q} }}{\sqrt{M_\text{r}(s(t))}} \left( \frac{\partial W_\text{r}}{\partial \sigma}  - \kappa \tilde{s} - R \dot{\tilde{s}} \right)
- \frac{\partial \phi(q,q_\text{r})}{\partial q}^\top \dot{q} \\
& =
- \dot{\sigma} \left( \frac{\partial W_\text{r}}{\partial \sigma}  - \kappa \tilde{s} - R \dot{\tilde{s}} \right)
- \frac{\partial \phi(q,q_\text{r})}{\partial q}^\top \dot{q} \\
& =
- \ddt W_\text{r}(\sigma)  + \dot{\sigma}(\kappa \tilde{s} + R \dot{\tilde{s}})
- \frac{\partial \phi(q,q_\text{r})}{\partial q}^\top \dot{q} \\
\end{align*}
Similarly, computing $\ddt H_\text{r}$ and substituting from \eqref{eq:e_c} one has
\begin{align*}
\ddt H_\text{r}
& = \pb{ -\kappa \tilde{s} - R \dot{\tilde{s}}
- \pb{\frac{\partial \phi(q,q_\text{r})}{\partial q_\text{r}}}{q'_\text{r}(s)}}{\dot{s}} \\
& = -\dot{s} (\kappa \tilde{s} + R \dot{\tilde{s}})
- \frac{\partial \phi(q,q_\text{r})}{\partial q_\text{r}} \ddt q_\text{r}(s(t))
\end{align*}
Finally, differentiating $E$ and substituting for the relevant expressions one obtains
\begin{align*}
\ddt E & =
- \ddt W_\text{r}(\sigma)  + \dot{\sigma}(\kappa \tilde{s} + R \dot{\tilde{s}})
- \frac{\partial \phi(q,q_\text{r})}{\partial q}^\top \dot{q} \\
 & \quad -\dot{s} (\kappa \tilde{s} + R \dot{\tilde{s}})
- \frac{\partial \phi(q,q_\text{r})}{\partial q_\text{r}} \ddt q_\text{r}(s(t)) \\
 & \quad + \ddt W_\text{r} (\sigma) +
\frac{\partial \phi(q,q_\text{r})}{\partial q} \dot{q}
 +
\frac{\partial \phi(q,q_\text{r})}{\partial q_\text{r}} \ddt q_\text{r}(s(t)) \\
 & \quad +
\frac{\partial \psi(s,\sigma)}{\partial s}\dot{s} +
\frac{\partial \psi(s,\sigma)}{\partial \sigma}\dot{\sigma} \\
& =
  -( \dot{s} -\dot{\sigma})(\kappa \tilde{s} + R \dot{\tilde{s}}) \\
 &\quad - \frac{\partial \phi(q,q_\text{r})}{\partial q}^\top \dot{q}
- \frac{\partial \phi(q,q_\text{r})}{\partial q_\text{r}} \ddt q_\text{r}(s(t))  \\
 & \quad
+
\frac{\partial \phi(q,q_\text{r})}{\partial q} \dot{q} +
\frac{\partial \phi(q,q_\text{r})}{\partial q_\text{r}} \ddt q_\text{r}(s(t)) \\
 &\quad +
\kappa (s - \sigma) \dot{s} -
\kappa (s - \sigma) \dot{\sigma}  \\
& = - \dot{\tilde{s}}(\kappa \tilde{s} + R \dot{\tilde{s}})
+ \kappa \tilde{s}\dot{\tilde{s}} \\
& =  -  R \dot{\tilde{s}}^2
\end{align*}
It follows from LaSalles principle that $\dot{E}(t) \rightarrow 0$.
Furthermore, since the feedback interconnections are smooth and bounded except at points $\dot{q} = 0$, and these are isolated points in the trajectory, it follows that the derivatives of $\dot{\tilde{s}}$ are uniformly bounded except on a countable collection of times.
It follows that $\dot{\tilde{s}}$ is piecewise uniformly continuous and applying a generalisation of Barbalat's lemma \cite{2011_Wu_TAC} it follows that $\dot{\tilde{s}} \rightarrow 0$ except on a set of measure zero.
Since $\tilde{s}$ is continuous, then $\tilde{s}$ converges to a constant.
\end{proof}

The proposed synchronisation control is a closed passive interconnection of Hamiltonian systems and resulting combined system is itself a passive Hamiltonian system.
Theorem~\ref{th:synch} demonstrates that the combined solution will converge asymptotically to a trajectory that lies in the set of forward invariant trajectories characterised by $\dot{E} = 0$ and $\tilde{s} = \text{const.}$.
As shown in the next section, the free trajectories $\dot{s} = 1$ and $q(t) = q_\text{r}(t)$ certainly lie in this forward invariant set.
However, without additional control, the damping term will continue to extract energy from the system, even if it is in small increments due to small disturbances, until the limiting trajectory has zero kinetic energy and lies at a local minimum of the combined potential $U(q) + W_\text{r}(\sigma) + W_\text{r}(s) + \phi(q,q_\text{r}) + \psi(s,\sigma)$.
(Note that for $\dot{q} = 0$ then $\tau_{\text{r}}^0 = 0$ by construction.)
Such solutions also lie in (a separate component of) the forward invariant set of the combined system trajectories.
In particular, synchronisation of the trajectories by itself is not sufficient to ensure the desired trajectory tracking and in the next section I propose an additional external energy pumping control to stabilise the energy level the combined system.

\section{Energy Regulation}
\label{sec:energy_stabilisation}

This section presents two results; firstly an explicit demonstration that the reference trajectory is a forward invariant trajectory for the combined Hamiltonian system considered in Theorem \ref{th:synch}.
Secondly, I characterise the energy level of the reference trajectory for the combined system and propose a simple energy pump outer loop control that stabilises this energy level in the system and consequently stabilises the desired trajectory.

\begin{lemma}\label{lem:trajectory}
Let $t_0 > 0 \in \R$ be a real constant and consider $t > t_0$.
Consider the trajectory
\begin{align}
q_\star(t) & = q_\text{r}(t - t_0), \label{eq:q_is_qr}\\
s_\star(t) & = t - t_0 \label{eq:s_star_solution}
\end{align}
with additionally
\begin{align}
\sigma_\star(t_0) = s_\star(t_0). \label{eq:sigma_is_s}
\end{align}
Then $\sigma(t) = s(t)$ for all $t \geq t_0$ and $(q_\star(t), s_\star(t))$ is a forward invariant solution of the combined system in Theorem \ref{th:synch} with $\dot{E} = 0$ and $\tilde{s} = 0$.
\end{lemma}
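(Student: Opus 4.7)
The plan is a direct verification: I substitute the candidate trajectory $(q_\star, s_\star)$ into the definition of $\sigma$ and the control laws \eqref{eq:tau}--\eqref{eq:mu}, show that the closed-loop right-hand sides reduce to the driving forces that make $(q_\star, s_\star)$ a solution of the two defining equations \eqref{eq:q_dyn} and \eqref{eq:s_dyn}, and then read off $\dot E = 0$ from the general energy balance established in the proof of Theorem~\ref{th:synch}.

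First I would dispose of the path-parameter identity. Assuming tentatively that $q(t) = q_\text{r}(s_\star(t))$ and $\dot{s}_\star = 1$, one has $\dot{q}(t) = q'_\text{r}(s_\star) \dot{s}_\star = q'_\text{r}(s_\star)$, so $\dot{q}^\top M(q) \dot{q} = q'_\text{r}(s_\star)^\top M(q_\text{r}(s_\star)) q'_\text{r}(s_\star) = M_\text{r}(s_\star)$. Inserting this into \eqref{eq:sigma_dot} gives $\dot{\sigma}(t) \equiv 1$, and combined with the initial condition \eqref{eq:sigma_is_s} this forces $\sigma(t) = s_\star(t)$, hence $\tilde{s} \equiv 0$ and $\dot{\tilde{s}} \equiv 0$ along the candidate trajectory.

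Next I evaluate the controls. Since $\phi$ is a positive-definite smooth cost minimised at $q = q_\text{r}$, both $\partial \phi/\partial q$ and $\partial \phi/\partial q_\text{r}$ vanish on the diagonal $q = q_\text{r}(s)$, and combined with $\tilde{s} = \dot{\tilde{s}} = 0$ one obtains $e_\text{c} = 0$. The reference dynamics \eqref{eq:s_dyn} with $e_\text{c} = 0$ and $\dot{s} = 1$ reduce to $\ddot{s} = 0$, which is consistent with $s_\star(t) = t - t_0$, so the reference system equation is satisfied. For the true-system input, the scalar factor $\mu$ in \eqref{eq:mu} collapses to $\mu = (1/\sqrt{M_\text{r}(s)})\, \partial W_\text{r}/\partial s$, and because $\sqrt{\langle \tau_{\dot{q}} | \dot{q}\rangle} = \sqrt{\dot{q}^\top M(q) \dot{q}} = \sqrt{M_\text{r}(s)}$, the first term of \eqref{eq:tau} becomes $-(1/M_\text{r}(s))\,(\partial W_\text{r}/\partial s)\, \tau_{\dot{q}}$. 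Using Lemma~\ref{lem:dotW_is_poweralongref}, i.e.\ $\partial W_\text{r}/\partial s = -\pb{\dot{q}_\text{r}}{\tau_\text{r}}$, this equals $(\pb{\dot{q}_\text{r}}{\tau_\text{r}}/M_\text{r}(s))\, \tau_{\dot{q}}$. Meanwhile, since $\pb{\tau_{\dot{q}}}{\dot{q}}/M_\text{r}(s) = 1$ along the candidate trajectory, the constraint force \eqref{eq:tau_r^0} simplifies to $\tau^0_\text{r} = \tau_\text{r} - (\pb{\dot{q}}{\tau_\text{r}}/M_\text{r}(s))\, \tau_{\dot{q}}$. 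The two contributions involving $\tau_{\dot{q}}$ cancel exactly (noting $\dot{q} = \dot{q}_\text{r}$), so $\tau$ reduces to $\tau_\text{r}$, which is by construction the input satisfying \eqref{eq:reference_trajectory}. Hence the true-system equation \eqref{eq:q_dyn} is satisfied by $q_\star$, confirming $(q_\star, s_\star)$ is a forward-invariant solution.

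Finally, the energy identity $\dot{E} = -R \dot{\tilde{s}}^2$ derived in Theorem~\ref{th:synch} gives $\dot{E} = 0$ immediately from $\dot{\tilde{s}} \equiv 0$. The only step requiring real care is the cancellation in $\tau$: it hinges on identifying the normalisation $\pb{\tau_{\dot{q}}}{\dot{q}}/M_\text{r}(s) = 1$ along the reference, and on the sign convention in Lemma~\ref{lem:dotW_is_poweralongref} matching the $-1$ absorbed into $\mu$ through $W_\text{r}$. Verifying these factors is the main bookkeeping obstacle, but no analytical difficulty arises beyond it.
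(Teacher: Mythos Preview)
Your proposal is correct and follows essentially the same route as the paper: verify $\dot\sigma\equiv 1$ from \eqref{eq:sigma_dot}, deduce $\tilde s=\dot{\tilde s}=0$ and hence $e_\text{c}=0$, then use $\pb{\tau_{\dot q}}{\dot q}=M_\text{r}(s)$ together with Lemma~\ref{lem:dotW_is_poweralongref} to collapse \eqref{eq:tau} to $\tau=\tau_\text{r}$. The only cosmetic difference is that the paper solves \eqref{eq:tau_r^0} for $\tau_{\dot q}$ before substituting, whereas you substitute the simplified $\tau_\text{r}^0$ directly; the cancellation is identical, and your explicit appeal to $\dot E=-R\dot{\tilde s}^2$ for the final claim is a clean way to close.
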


\begin{proof}
Recalling \eqref{eq:sigma_dot} and using \eqref{eq:q_is_qr} and \eqref{eq:s_star_solution} one has that for all $t \geq t_0$
\begin{align*}
\dot{\sigma}(t) & = \frac{\sqrt{\dot{q}(t)^\top M(q(t)) \dot{q}(t)}}
{\sqrt{M_\text{r}(s(t))}} \\
 & = \dot{s}_\star(t) \frac{\sqrt{\dot{q}_{\text{r}}(t-t_0)^\top M(q_{\text{r}}(t-t_0)) \dot{q}_{\text{r}}(t-t_0)}}
{\sqrt{\dot{q}_{\text{r}}(t - t_0)^\top M(q_{\text{r}}(t - t_0) \dot{q}_{\text{r}}(t - t_0)}} \\
& = 1,
\end{align*}
since $\dot{s}_\star = 1$.
From \eqref{eq:sigma_is_s} it follows that $\sigma(t) = s(t)$ for all $t \geq t_0$ and both $\tilde{s} = 0$ and $\dot{\tilde{s}} = 0$ hold along the trajectory.
From \eqref{eq:q_is_qr} it follows that $\phi(q, q_\text{r}) = 0$ and hence
\[
\frac{\partial \phi}{\partial q}(q, q_\text{r}) = 0, \quad\quad\quad
\frac{\partial \phi}{\partial q_\text{r}}(q, q_\text{r}) = 0.
\]
Recalling \eqref{eq:e_c} it follows that $e_\text{c} = 0$ and it was shown that $s_\star(t)$ is a solution of the dynamics \eqref{eq:s_dyn} in \S\ref{sec:control}.
Using this in \eqref{eq:tau} one has
\[
\tau = - \mu \frac{\tau_{\dot{q}}}{\sqrt{\pb{\tau_{\dot{q}}}{\dot{q}}}}
  + \tau_r^0.
\]
Recalling \eqref{eq:tau_r^0} and solving for $\tau_{\dot{q}}$ one has
\begin{align*}
\tau_{\dot{q}}
&  =  \frac{M_\text{r}(s)}{\pb{\dot{q}}{\tau_r}}
\left( \frac{\pb{\tau_{\dot{q}}}{\dot{q}}}{M_\text{r}(s)} \tau_r -\tau^0_\text{r} \right)  =  \frac{M_\text{r}(s)}{\pb{\dot{q}}{\tau_r}}
\left( \tau_r -\tau^0_\text{r} \right)
\end{align*}
where $\pb{\tau_{\dot{q}}}{\dot{q}} = M_\text{r}(s)$ since $q = q_\text{r}$ and $\dot{s} = 1$.
The constant $\mu$ is given by
\begin{align*}
\mu  & = \frac{1}{\sqrt{M_\text{r}(s(t))}} \left( \frac{\partial W_\text{r}}{\partial \sigma} \right)
 = \frac{\dot{W}_\text{r}}{\sqrt{M_\text{r}(s(t))}} = - \frac{\pb{\tau_\text{r}}{\dot{q}_\text{r}}}{\sqrt{M_\text{r}(s(t))}}.
\end{align*}
where the final relationship follows from Lemma \ref{lem:dotW_is_poweralongref}.
Evaluating $\tau$ and substituting for $\mu$ and $\tau_{\dot{q}}$ one obtains
\begin{align*}
\tau & = \frac{\pb{\tau_\text{r}}{\dot{q}_\text{r}}}{\sqrt{M_\text{r}(s(t))}}
\frac{1}{\sqrt{\pb{\tau_{\dot{q}}}{\dot{q}}}}
\frac{M_\text{r}(s)}{\pb{\dot{q}}{\tau_r}}
\left( \tau_r -\tau^0_\text{r} \right)
+ \tau_\text{r}^0 \\
 & = \frac{1}{\sqrt{M_\text{r}(s)}}
\frac{M_\text{r}(s)}{\sqrt{\pb{\tau_{\dot{q}}}{\dot{q}}}}
\left( \tau_r -\tau^0_\text{r} \right)
+ \tau_\text{r}^0 \\
& =\left( \tau_r -\tau^0_\text{r} \right)
+ \tau_\text{r}^0 = \tau_r
\end{align*}
where the final line follows since $M_\text{r}(s) = \pb{\tau_{\dot{q}}}{\dot{q}}$ along trajectories $q = q_\text{r}$.
It follows from \eqref{eq:reference_trajectory} that $q_\star(t)$ is solution of \eqref{eq:q_dyn}.
This completes the proof.
\end{proof}

\begin{remark}
A key property of the proposed design is that the trajectory tracking can only be achieved up to an unknown constant time offset \eqref{eq:q_is_qr}.
This comes from the fundamental approach where the desired trajectory is embedded as zero dynamics in a larger system and then stabilised.
The philosophy of this approach draws strongly from the inspiration of Full \etal \cite{1999_Koditschek_JEB} and Westervelt \etal \cite{2003_Grizzle_2003} in stabilising zero dynamics of a larger system to obtain the desired behaviour as an emergent behaviour rather than forcing the desired response directly.
The consequence is that it is not the trajectory that is stabilised, rather it is a set of dynamics that generate the trajectory that is stabilised, and since these dynamics are time-invariant the best that can be achieved is stabilisation up to a time-offset.
\hfill$\Box$
\end{remark}

Although the trajectory $(q_\star, s_\star)$ is a free trajectory of the interconnected system, it is not the only such solution.
Understanding the full phase portrait of the interconnected system is far more complicated than those for the reference trajectory shown in Fig.~\ref{fig:phase_portrait} and beyond the scope of this paper (or the author).
However, the desired reference trajectory is closely associated with the shape of the phase portrait for the solution $\dot{s} = 1$ in the reference system, and a similar phase portrait for $\sigma$ that is `embedded' in the dynamics of the true system.
Consider the energy level \eqref{eq:E_interconned_system} of the interconnected system along the free trajectory $(q_\star, s_\star)$.
Both $\phi$ and $\psi$ potentials are zero, and $E_\text{r}$ \eqref{eq:E_r_zero} is also zero.
Finally, $H(q_\text{r}, \dot{q}_\text{r}) + W_\text{r}(\sigma)$ is also zero along this trajectory by the same argument used to derive \eqref{eq:E_r_zero}.
Thus, the desired trajectory is characterised by the zero energy level set of interconnected system.
Furthermore, it is straightforward to see that the energy of the true and reference systems separately are also zero.
From the phase portrait of the reference system (Fig.~\ref{fig:phase_portrait}) the only two zero energy solutions are characterised $\dot{s} = \pm1$.
Recalling \eqref{eq:sigma_dot} it is clear that $\dot{\sigma} >0$ and since $\tilde{s}$ converges to a constant, the solution $\dot{s} = -1$ cannot lie in the limit set of the closed-loop system.
Thus, trajectories $(q_\star, s_\star)$ (Lemma~\ref{lem:trajectory}), for some offset $t_0$, are the only forward invariant trajectories of the interconnected system that lie in the limit set of the closed-loop system with energy $E \equiv 0$.

This suggests a very simple outer level control whose goal is to stabilize the energy level of the interconnected system, while also stabilizing the internal energy of the true and reference systems to zero.
This approach is highly reminiscent of the energy pumping controls proposed in the late nineties \cite{1995_Spong_CSM}.
Since we have control in both the true system and the reference system then it is possible to inject energy in both systems at the same time and minimize the need for energy exchange during asymptotic tracking of the trajectory, as well as ensuring that the separate systems independently stabilise to zero energy level sets.

Choose a constant gain $k > 0$, then the proposed ``energy pump'' inputs are
\begin{align}
f := - k \left(H(q,\dot{q}) + W_\text{r}(\sigma)\right) \dot{q} \\
f_r := - k \left(H_\text{r}(s,\dot{s}) + W_\text{r}(s) \right) \dot{s}
\end{align}
where $f$ is applied to the true system and $f_r$ to the reference system.
The action of $f$ will inject energy
\[
\pb{f}{\dot{q}} = - k (H(q,\dot{q}) + W_\text{r}(\sigma)) \pb{\dot{q}}{\dot{q}}
\]
into the true system, acting to decrease the energy level when $H(q,\dot{q}) + W_\text{r}(\sigma)$ is positive and increase energy levels when it is negative.
Similarly, $f_r$ will stabilise the energy levels in the reference system to zero.
Both controls will become inactive along the desired trajectory $(q_\star, s_\star)$ and will only contribute infinitesimally to stabilise the energy level of the interconnected system.

\section{Graphical Representation}
\label{sec:bond_graph}

The full control can be graphically described in the bond graph shown in Figure \ref{fig:bondgraph}.
In this bond graph, the only external energy to the system is supplied (and removed) through the energy pump shown in the centre top of the figure and the stabilising dissipation $R \dot{\tilde{s}}^2$ term in the bottom right.
The remainder of the graph describes the interconnection of the true and reference systems.
The upper interconnection branch is the direct coupling of $q$ to $q_\text{r}$ through a spring potential $\phi$.
The lower left modulated transformer (MTF) and the potential $W_\text{r}(\sigma)$ is the coupling of the reference potential to the true system.
The spring-damper coupling of $s$ and $\sigma$ is shown in the bottom right of the diagram.

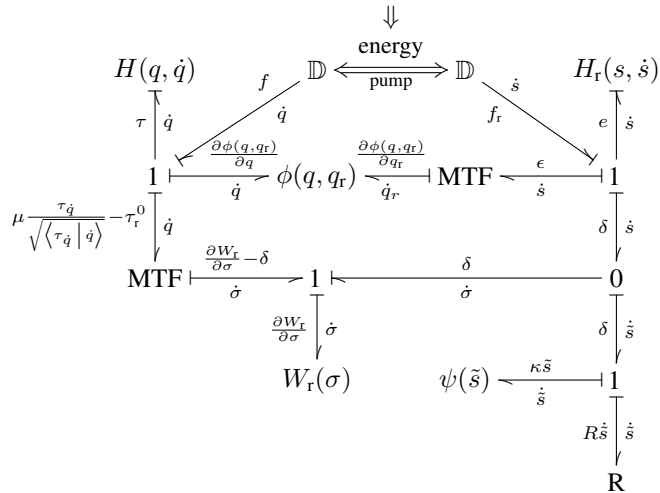
\begin{figure}[htb]
	\begin{centering}
		$
\xymatrix{ 
\\
    H(q,\dot{q})
	&
    \bgel{$\mathbb{D}$}
    \ar@{}[dl]^(0.9){}="b"_(0){}="a"
    \ar@^{->/|} "a":"b" ^(0.3){\dot{q}}_(0.3){f}
    \ar@{<=>}[r]^{\begin{array}{c} \Downarrow \\ \text{\small energy} \end{array}}_{\text{pump}}
	&
    \bgel{$\mathbb{D}$}
    \ar@{}[dr]^(0.9){}="b"_(0){}="a"
    \ar@^{->/|} "a":"b" ^(0.3){\dot{s}}_(0.3){f_\text{r}}
	&
	H_\text{r}(s,\dot{s})
\\
	\bgel{1}
	\ar@_{->/|}[u]_{\dot{q}}^{\tau}
\ar@^{/|->}[d]^{\dot{q}}_{\mu \frac{\tau_{\dot{q}}}{\sqrt{\pb{\tau_{\dot{q}}}{\dot{q}}}} - \tau_{\text{r}}^0}
	\ar@_{/|->}[r]_{\dot{q}}^(0.55){\frac{\partial \phi(q, q_\text{r})}{\partial q}}
	&
	\bgel{$\phi(q,q_\text{r})$}
	&
	\bgel{MTF}
	\ar@^{/|->}[l]^{\dot{q}_r}_{\frac{\partial \phi(q,q_\text{r})}{\partial q_\text{r}}}
	&
	\bgel{1}
	\ar@_{->/|}[u]_{\dot{s}}^{e}
	\ar@^{/|->}[l]^{\dot{s}}_{\epsilon}
	\ar@^{/|->}[d]^{\dot{s}}_{\delta}
\\
	\bgel{MTF}
	\ar@_{/|->}[r]_{\dot{\sigma}}^{\frac{\partial W_\text{r}}{\partial \sigma} - \delta}
	&
	\bgel{1}
	\ar@^{/|->}[d]^{\dot{\sigma}}_{\frac{\partial W_\text{r}}{\partial \sigma}}
	\ar@_{/|<-}[rr]_{\dot{\sigma}}^{\delta}
    &
	&
	\bgel{0}
	\ar@^{/|->}[d]^{\dot{\tilde{s}}}_{\delta}
	&
\\
    &		
    \bgel{$W_{\text{r}}(\sigma)$}
	&
	\bgel{$\psi(\tilde{s})$}
	&
	\bgel{1}
	\ar@^{/|->}[l]^{\dot{\tilde{s}}}_{\kappa\tilde{s}}
	\ar@^{/|->}[d]^{\dot{\tilde{s}}}_{R\dot{\tilde{s}}}
\\
    &		
	&
	&
	\bgel{R}
}$
\caption{Bond Graph representation of the proposed synchronisation control in Theorem \ref{th:synch}.}
\label{fig:bondgraph}
\end{centering}
\end{figure}

\section{Conclusion}

The proposed control architecture is fully non-linear and based on energy principles.
Due to its foundation, the approach is expected to provide highly robust control with strong passivity properties with respect to external disturbances.
There is considerable potential to consider implementations on non-standard state spaces such as $\SE(3)$ for mobile robotic vehicles as well as extensions based on the principles of passive systems \cite{2007_Secchi_book}.
Although the approach is derived for conservative mechanical systems it should be able to be applied to more liberal systems (with dissipative elements) as long as suitable conditions on the existence of $W_\text{r}$ are provided.

\section*{Acknowledgments}
This research was supported by the Australian Research Council
through the ``Australian Centre of Excellence for Robotic Vision'' CE140100016.




\begin{thebibliography}{OLNS98}

\bibitem[Cra89]{Cra89}
J.J. Craig.
\newblock {\em Introduction to Robotics mechanics and control}.
\newblock Addison-Wesley publishing company inc., second edition, 1989.

\bibitem[FK99]{1999_Koditschek_JEB}
R.~Full and D.~Koditschek.
\newblock Templates and anchors: Neuromechanical hypotheses of legged
  locomotion on land.
\newblock {\em Journal of Experimental Biology}, 202:3325--3332, 1999.

\bibitem[FSS03]{2003_Fujimoto_Automatica}
Kenji Fujimoto, Kazunori Sakurama, and Toshiharu Sugie.
\newblock Trajectory tracking control of port-controlled {H}amiltonian systems
  via generalized canonical transformations.
\newblock {\em Automatica}, 39:2059--2069, 2003.

\bibitem[Kel95]{1995_Kelly_Robotica}
Rafael Kelly.
\newblock A tuning procedure for stable pid control of robot manipulators.
\newblock {\em Robotica}, 13(2):141–148, 1995.

\bibitem[KMA88]{1988_Kawamura_ICRA}
S.~Kawamura, F.~Miyazaki, and S.~Arimoto.
\newblock Is a local linear {PD} feedback control law effective for trajectory
  tracking of robot motion?
\newblock In {\em Proceedings of the International Conference on Robotics and
  Automation (ICRA)}, pages 1335--1340, 1988.

\bibitem[Kod87]{1987_Koditschek_techrep}
D.E. Koditschek.
\newblock Quadratic lyapunov functions for mechanical systems.
\newblock Technical Report 8703, Centre for Systems Science, Yale University,
  Department of Electrical Engineering, 1987.

\bibitem[Lef00]{2000_Lefeber_PhD}
{Adriaan Arie Johannes} Lefeber.
\newblock {\em Tracking Control of Nonlinear Mechanical Systems}.
\newblock phdthesis, 2000.
\newblock Universiteit Twente.

\bibitem[LS89]{1989_Weiping_scl}
Weiping Li and {Jean-Jacques} Slotine.
\newblock An indirect adaptive robot controller.
\newblock {\em Systems \& Control Letters}, 12:259--266, 1989.

\bibitem[Mar73]{1973_Markewitz_techrep}
B.~Markewitz.
\newblock Analysis of a computed torque drive methods and comparison with
  conventional position servo for a computed controlled manipulator.
\newblock Technical Report 33-601, Jet Propulsion Lab, California Institute of
  Technology, 1973.

\bibitem[MG86]{1986_Middleton_cdc}
{R.H.} Middleton and {G.C.} Goodwin.
\newblock Adaptive computed torque control for rigid link manipulators.
\newblock In {\em Proceedings of the 25th Conference on Decision and Control},
  pages 68--73, 1986.

\bibitem[NFK00]{2000_Nakanishi_TRO}
J.~Nakanishi, T.~Fukuda, and D.~Koditschek.
\newblock A brachiating robot controller.
\newblock {\em IEEE Transactions Robotic Automation}, 16:109--203, 2000.

\bibitem[OLNS98]{1998_Ortega}
R.~Ortega, A.~Loria, {P. J.} Nicklasson, and H.~{Sira-Rmirez}.
\newblock {\em Passivity-based control of Euler–Lagrange systems}.
\newblock London, Springer, 1998.

\bibitem[Pau82]{1982_Paul_MITpress}
{R. P.} Paul.
\newblock {\em Robot Manipulators: Mathematics, Programming, and Control}.
\newblock MIT Press Cambridge, MA, USA, 1982.

\bibitem[SHV06]{SpoHutVid_2006}
M.~W. Spong, S.~Hutchinson, and M.~Vidyasagar.
\newblock {\em Robot Modeling and control}.
\newblock Wiley, 2006.

\bibitem[SL91]{SloLi91}
J.~Slotine and W.~Li.
\newblock {\em Applied nonlinear control}.
\newblock Prentice Hall, 1991.

\bibitem[Spo95]{1995_Spong_CSM}
M.W. Spong.
\newblock The swing up control problem for the acrobot.
\newblock {\em IEEE Control Systems Magazine}, 15(1):49--55, 1995.

\bibitem[SSF07]{2007_Secchi_book}
Cristian Secchi, Stefano Stramigioli, and Cesare Fantuzzi.
\newblock {\em Control of interactive robotic interfaces: A port-{H}amiltonian
  approach}, volume~29 of {\em Springer Tracts in Advanced Robotics}.
\newblock Springer, Berlin, 2007.

\bibitem[TA81]{1981_Takegaki_JDSMC}
Morikazu Takegaki and Suguru Arimoto.
\newblock A new feedback method for dynamic control of manipulators.
\newblock {\em Journal of Dynamic Systems, Measurement, and Control},
  103(2):119--125, 1981.

\bibitem[WGK03]{2003_Grizzle_2003}
{E. R.} Westervelt, {J. W.} Grizzle, and {D. E.} Koditschek.
\newblock Hybrid zero dynamics of planar biped walkers.
\newblock {\em IEEE Transactions on Automatic Control}, 48(1):42--56, 2003.

\bibitem[WXX11]{2011_Wu_TAC}
Zhaojing Wu, Yuanqing Xia, and Xuejun Xie.
\newblock Stochastic barbalat’s lemma and its applications.
\newblock {\em IEEE Transactions on Automatic Control}, 57(6):1537--1543, 2011.

\end{thebibliography}
\end{document}